\documentclass[11pt]{article}

%-----------------------------------------------------------------------------%
% Margins:
%-----------------------------------------------------------------------------%

\setlength{\topmargin}{0in}
\setlength{\headheight}{0in}
\setlength{\headsep}{0in}
\setlength{\textheight}{8.9in}
\setlength{\oddsidemargin}{0in}
\setlength{\textwidth}{6.5in}

%-----------------------------------------------------------------------------%
% Font:
%-----------------------------------------------------------------------------%

\usepackage{palatino}
\usepackage{mathpazo}
\usepackage{stmaryrd}

%-----------------------------------------------------------------------------%
% PDF:
%-----------------------------------------------------------------------------%

\usepackage{hyperref}
\hypersetup{pdfpagemode=UseNone}

%-----------------------------------------------------------------------------%
% Various packages:
%-----------------------------------------------------------------------------%

\usepackage{amsfonts}
\usepackage{amssymb}
\usepackage{amsmath}
\usepackage{latexsym}
\usepackage{amsthm}
\usepackage{eepic}
\usepackage{sectsty}
\usepackage{graphicx}

%-----------------------------------------------------------------------------%
% Theorem-like environments:
%-----------------------------------------------------------------------------%

\newtheorem{theorem}{Theorem}
\newtheorem{lemma}[theorem]{Lemma}
\newtheorem{prop}[theorem]{Proposition}

\theoremstyle{definition}
\newtheorem{definition}[theorem]{Definition}

\newtheorem{remark}[theorem]{Remark}

\newtheorem{example}[theorem]{Example}

%-----------------------------------------------------------------------------%
% Macros:
%-----------------------------------------------------------------------------%

\newcommand{\microspace}{\mspace{0.5mu}}

\newcommand{\tr}{\operatorname{Tr}}
\newcommand{\pt}{\operatorname{T}}
\renewcommand{\t}{{\scriptscriptstyle\mathsf{T}}}
\newcommand{\ip}[2]{\left\langle #1 , #2\right\rangle}
\def\({\left(}
\def\){\right)}
\def\I{\mathbb{1}}

\newcommand{\setft}[1]{\mathrm{#1}}
\newcommand{\lin}[1]{\setft{L}\left(#1\right)}

\newcommand{\density}[1]{\setft{D}\left(#1\right)}
\newcommand{\unitary}[1]{\setft{U}\left(#1\right)}

\newcommand{\herm}[1]{\setft{Herm}\left(#1\right)}
\newcommand{\pos}[1]{\setft{Pos}\left(#1\right)}

\newcommand{\ppt}[1]{\setft{PPT}\left(#1\right)}

\def\complex{\mathbb{C}}
\def\real{\mathbb{R}}

\def\integer{\mathbb{Z}}

\def \lket {\left|}
\def \rket {\right\rangle}
\def \lbra {\left\langle}
\def \rbra {\right|}
\newcommand{\ket}[1]{\lket\microspace #1 \microspace\rket}
\newcommand{\bra}[1]{\lbra\microspace #1 \microspace\rbra}

\def\X{\mathcal{X}}
\def\Y{\mathcal{Y}}
\def\A{\mathcal{A}}
\def\B{\mathcal{B}}

\usepackage{color}
 % use for simply removing TODO comments

%-----------------------------------------------------------------------------%
\begin{document}

\title{\bf PPT-indistinguishable states via semidefinite programming}
 \author{
   {\large Alessandro Cosentino\thanks{acosenti@cs.uwaterloo.ca}}\\[2mm]
   {\it Institute for Quantum Computing and School of Computer Science}\\
   {\it University of Waterloo}
 }
\date{May 4, 2012}

\maketitle

\begin{abstract}
We show a simple semidefinite program whose optimal value is equal to the maximum probability of 
perfectly distinguishing orthogonal maximally entangled states using any PPT measurement 
(a measurement whose operators are positive under partial transpose).
When the states to be distinguished are given by the tensor product of Bell states, 
the semidefinite program simplifies to a linear program.
In [Phys. Rev. Lett. 109, 020506 (2012)], Yu, Duan and Ying exhibit a set of 
$4$ maximally entangled states in $\complex^{4}\otimes\complex^{4}$, 
which is distinguishable by any PPT measurement only with probability strictly less than $1$.
Using semidefinite programming, we show a tight bound of $7/8$ on this probability 
($3/4$ for the case of unambiguous PPT measurements).
We generalize this result by demonstrating a simple construction of a set of $k$ states in $\complex^{k}\otimes\complex^{k}$ 
with the same property, for any $k$ that is a power of $2$.
By running numerical experiments, we obtain some non-trivial results about the PPT-distinguishability of
certain interesting sets of generalized Bell states in $\complex^{5}\otimes\complex^{5}$ and $\complex^{6}\otimes\complex^{6}$.
\end{abstract}

%-----------------------------------------------------------------------------%
\section{Introduction}
\label{sec:introduction}
%-----------------------------------------------------------------------------%
A subject of much interest in quantum information theory is understanding powers and limitations of 
the set of quantum operations and measurements defined within the paradigm of \emph{LOCC}, 
short for \emph{Local Operations and Classical Communication}. 
This is aimed at a more general understanding of the role of 
entanglement and non-locality in quantum information.

The problem of distinguishing certain sets of pure states is one of the most basic problems among those 
used to test what can and what cannot be achieved using LOCC protocols.
We will consider this problem in the bipartite case, for which the setup is very simple. 
Suppose that Alice and Bob are given a shared quantum state, 
drawn with some probability from a set of orthogonal states of which they have full knowledge.
Their goal is to determine which state is given. We could consider variants of this problem,
according to how much error we allow, but we will only investigate the case of 
\emph{perfect distinguishability}, where no error is allowed.
The question we are interested in is for what sets of states Alice and Bob are able 
to perfectly achieve their goal by performing only LOCC protocols.
The sets we consider contain only mutually orthogonal states, so the restriction of allowing only LOCC protocols is important.  
If global operations were permitted, Alice and Bob could obviously distinguish the states with no error.

A fundamental result in this area is by Walgate et al. \cite{Walgate00}, 
who established that any two orthogonal pure states can be locally distinguished with no error.
This result has been extended to the case of three maximally entangled states when 
Alice and Bob's systems are $3$-dimensional \cite{Nathanson05}.
Both these results show a surprising power of LOCC protocols.
On the other hand, there exist examples of larger sets that are not perfectly distinguishable 
if we limit the allowed operations to the LOCC framework.
In fact, if both Alice and Bob hold $d$-dimensional systems, 
it is impossible for them to locally distinguish {\em any} $k > d$ 
maximally entangled states \cite{Ghosh04}.
It is important to observe that entanglement is not an essential feature of indistinguishable sets of states. 
For instance, Bennett et al. \cite{Bennett99} showed a set containing only product states 
that cannot be perfectly distinguished by LOCC protocols.

It is natural to ask what is the upper bound on the number of states that can be perfectly distinguishable by LOCC measurements.
If Alice and Bob's systems are $d$-dimensional, is it always possible to locally distinguish a set of $k \leq d$
orthogonal states?
If we allow product states to be in the set, we can easily construct indistinguishable sets with a fixed size 
in any dimension we like, by using the result in \cite{Ghosh04}.
However the question becomes interesting when we consider sets consisting only of maximally entangled states.
In some sense, entanglement makes distinguishability harder, but can also be used as a resource by the two parties.
For $d \leq 3$, the above-mentioned results by Walgate et al. \cite{Walgate00} and Nathanson \cite{Nathanson05} 
give a positive answer to the question. For $d \geq 4$, the problem is not yet as well understood.
Fan \cite{Fan04} showed that when $d$ is prime, any $k$ orthogonal maximally entangled states can be perfectly distinguished
if $k(k-1) \leq 2d$.
Recently, Bandyopadhyay et al. \cite{Ghosh11} gave examples of sets of $k \leq d$ maximally entangled states 
in $\complex^{d} \otimes \complex^{d}$ for $d = 4, 5, 6$ that are not perfectly distinguishable by \emph{one-way} LOCC protocols. 
Interestingly, for $d = 5$ and $d = 6$, they showed sets of size $d-1$.
In another recent result, Yu et al. \cite{Duan11} gave an example of a set of $4$ maximally entangled states in
$\complex^{4} \otimes \complex^{4}$ that cannot be perfectly distinguished by \emph{positive partial transpose} operations (PPT operations).
PPT operations form a superset (in fact, a strict superset) of {\em separable} operations, which, in turns, form a strict superset of 
LOCC operations. Therefore, any upper bound on the power of PPT operations for achieving a particular task holds also against LOCC operations.
The partial transpose mapping has an interesting relationship to entanglement and distillation, with the Peres-Horodecki criterion
being the most renowned application of this relationship. Moreover,
the structure of the set of PPT operations is mathematically simpler than the one of LOCC operations
and many problems are easier to handle when we consider PPT rather than LOCC.
In fact, the set of positive partial transpose operators form a closed convex cone 
and many problems concerning them can be studied by using semidefinite programming, see \cite{Rains00}, for instance.
Yu et al. \cite{Duan11} also noticed that the above-mentioned result about the indistinguishability of any set of $k > d$ 
states (\cite{Ghosh04}) holds even if we broaden the set of allowed operations to PPT.

In this paper we show how the success probability of distinguishing a set of states by using PPT measurements 
can be expressed as the solution of a semidefinite program. A consequence of this is a simpler proof 
using semidefinite duality that the set given in \cite{Duan11} is not perfectly distinguishable by PPT measurements.
In particular we show a tight bound of $7/8$ on the probability of success.
We generalize this result by showing an easy construction of sets with the same properties 
for the case when $k = d$ is any power of $2$.
Another consequence of expressing the problem as a semidefinite program is that for small dimensions 
we can find the optimal solution by running a semidefinite programming solver.
By doing that, we find that the PPT approach cannot be used to prove that the sets mentioned by 
Bandyopadhyay et al. in \cite{Ghosh11} are not perfectly distinguishable by LOCC measurements. 
Recall that they only prove the impossibility of perfectly distinguishing them by one-way LOCC protocols.  
On the other hand, again by numerical calculation, we find examples of local indistinguishable sets of non-trivial size, 
whose states lie in systems of the same dimensions as the ones in \cite{Ghosh11}. 
In particular, these sets were considered in \cite{Ghosh04}, 
where they were shown to be not distinguishable by teleportation protocols,
a subset of LOCC protocols. 

We conclude the paper with a section about unambiguous PPT discrimination. 
Again, we formulate the problem as a semidefinite program and we show a bound of $3/4$ on the success probability of
distinguishing the set given in \cite{Duan11}, when we restrict the strategy to be unambiguous.

%-----------------------------------------------------------------------------%
\section{Preliminaries}
\label{sec:preliminaries}
%-----------------------------------------------------------------------------%

Throughout this paper we will use notation and terminology that, for most part,
is standard in quantum information theory.
All vector spaces discussed are assumed to be complex Euclidean spaces.
We write $\lin{\X,\Y}$ to denote the space of linear mappings from a space $\X$ to a space $\Y$, 
and we write $\lin{\X}$ as shorthand for $\lin{\X,\X}$.
For any space $\X$, we write $\herm{\X}$, $\pos{\X}$, $\density{\X}$ and $\unitary{\X}$ to denote the sets
of all Hermitian operators, positive semidefinite operators, density operators and unitary operators on $\X$, respectively.
The identity operator acting on a given space $\X$ is denoted by $\I_{\X}$, or just as $\I$ when $\X$ is implicit.
For Hermitian operators $A, B \in \herm{\X}$ the notations $A\geq B$ 
and $B \leq A$ indicate that $A - B$ is positive semidefinite.
When we refer to a channel, we mean a completely positive, trace-preserving linear mapping of the form
$$
\Phi:\lin{\X} \rightarrow \lin{\Y}.
$$
The \emph{transpose} mapping $T : \lin{\X} \rightarrow \lin{\X}$ is the positive (non completely positive) mapping defined
as $T(X) = X^{\t}$ for all $X \in \lin{\X}$. 
The \emph{partial transpose} on $\X\otimes\Y$ 
is the mapping defined by tensoring the transpose mapping acting on $\X$
and the identity mapping acting on $\Y$ and it is denoted as
\[
 \pt_{\X} = T \otimes \I_{\lin{\Y}}.
\]
Positive operators that remain positive under the action of partial transposition are called \emph{PPT operators}.
We write $\ppt{\X:\Y}$ to denote the set of all PPT operators on a tensor product space $\X\otimes\Y$.
Notice that for the definition of PPT operator, the subspace on which we apply the partial transposition does not matter.
Let us also notice that the set $\ppt{\X:\Y}$ is a closed convex cone.
We will assume that $\A = \complex^{d}$ and $\B = \complex^{d}$ are two identical vector spaces referring to Alice's and Bob's systems respectively.
A pure state $u \in \A\otimes\B$ lying across these spaces is called \emph{maximally entangled} if 
we are left with a maximally mixed state once we trace out one of the spaces, 
i.e., $\tr_{\A}(uu^{*}) = \tr_{\B}(uu^{*}) = \I/d$.

In the rest of the paper we will use the standard Pauli matrices,
$\sigma_{0}=\I, \sigma_{1},\sigma_{2},\sigma_{3} \in \unitary{\complex^{2}}$,
and the standard set of \emph{Bell states} 
$\{ \ket{\psi_{i}} \in \complex^{2}\otimes\complex^{2} :  i\in [0,3] \}$, where
$$
\ket{\psi_{0}} = \frac{1}{\sqrt{2}}(\ket{00}+\ket{11}) \quad\mbox{and}\quad 
\ket{\psi_{i}} = (\I \otimes \sigma_{i})\ket{\psi_{0}}, \quad\mbox{for } i=1,2,3.
$$
For any positive integer $d$, let $\integer_{d}$ be the ring of integers modulo $d$ and $\omega_{d} = \exp(2 \pi i/d)$.
For any choice of $(a,b) \in \integer_{d}^{2}$ we define the \emph{generalized Bell state} 
$\ket{\psi_{a,b}} \in \complex^{d}\otimes\complex^{d}$ as follows:
 \[
    \ket{\psi_{a,b}} = \frac{1}{\sqrt{d}}\sum_{j=0}^{d-1}\omega_{d}^{aj}\ket{j}\otimes\ket{j+b},
 \]
where addition is in $\integer_{d}$.
Whenever we will write states as lowercase Greek letters out of the kets, 
we will mean their density operator representation, for example, $\psi_{0} = \ket{\psi_{0}}\bra{\psi_{0}}$. 
A \emph{measurement} on a space $\X$ is a set of operators $\{P_{a} : a \in \Gamma \} \subset \pos{\X}$, 
indexed by a finite, nonempty set of measurement outcomes $\Gamma$, for which the following constraint holds:
$$
\sum_{a\in\Gamma}P_{a} = \I_{\X}.
$$

In the rest of this paper, we will make use of semidefinite programming.
For a formalization of semidefinite programming similar to the one used in this paper 
and a general overview of semidefinite duality theory, see \cite{Watrous09}, for instance.

%-----------------------------------------------------------------------------%
\section{PPT distinguishability}
\label{sec:ppt}
%-----------------------------------------------------------------------------%

Let $\A$ and $\B$ be the complex Euclidean spaces corresponding to Alice and Bob's systems 
and let $S = \{u_1, ... , u_k\} \subset \A\otimes\B$ be a set
of mutually orthogonal unit vectors. Alice and Bob are given a pure state $u_i \in S$, for 
some $i \in \{1,\ldots,k\}$ drawn with some probability $p_i$, and their goal is to determine the value of $i$, 
assuming that they have complete knowledge of the set $S$.

A measurement $\{P_{a} : a \in \Gamma \} \subset \pos{\A\otimes\B}$ is said to be PPT if it can be implemented by a PPT channel, 
or equivalently, if each measurement operator is PPT, that is, $P_{a}\in\ppt{\A:\B}$ for each $a \in \Gamma$.
We say that a set $S$ is {\em PPT-distinguishable} if Alice and Bob can achieve the goal described above for the set $S$ 
without error and by using only PPT measurements. Otherwise we say that the set $S$ is {\em PPT-indistinguishable}.

\subsection{A semidefinite program for the PPT distinguishability problem}
\label{subsec:sdp-ppt}
We will now describe and analyze a semidefinite program whose optimal value is equal to the maximum success probability 
of PPT-distinguishing the set of states that is given as input to the program.

Let $k > 0$ be an integer, $p \in\real^{k}$ a probability vector, and assume that $S = \{ \rho_{i} \in \A \otimes \B : i=1,\ldots,k \}$ 
is the set of state that Alice and Bob are asked to distinguish. Each state $\rho_{i} \in S$ 
is prepared with probability $p(i)$.
We can phrase the maximum probability of successfully distinguishing $S$ with the following semidefinite program
whose constraints characterize the fact that the measurement must be PPT: 

\begin{center}
    \centerline{\underline{Primal problem}}\vspace{-4mm}
    \begin{align}
      \text{maximize:}\quad & \sum_{j = 1}^k p_{j} \ip{P_j}{\rho_{j}}\notag\\
      \text{subject to:}\quad & P_1+ \cdots + P_k = \I_{\A} \otimes \I_{\B},\label{sdp-primal}\\
      & P_1,\ldots,P_k\in\ppt{\A:\B}.\notag
    \end{align}
\end{center}
Since we are interested in perfect distinguishability, in the rest of the paper we will assume, without loss of generality, 
that each state is prepared with uniform probability, i.e., $p(i) = 1/k$, for each $i=1,\ldots,k$. 
We obtain the following dual problem by routine calculation:

\begin{center}
    \centerline{\underline{Dual problem}}\vspace{-4mm}
    \begin{align}
      \text{minimize:}\quad & \frac{1}{k}\tr(Y)\notag\\
      \text{subject to:}\quad & Y - \rho_{j} \geq \pt_{\A}(Q_{j}), \quad j=1,\ldots,k \; ,\label{sdp-dual}\\
      & Y\in\herm{\A\otimes\B}, \notag\\
      & Q_{1}, \ldots, Q_{k}\in\pos{\A\otimes\B}.\notag
    \end{align}
\end{center}
If we further constrain the dual problem, by imposing equality instead of inequality constraints 
in the above program, we obtain the following version:
\begin{center}
    \begin{align}
      \text{minimize:}\quad & \frac{1}{k}\tr(Y)\notag\\
      \text{subject to:}\quad & Y \geq \pt_{\A}(\rho_{j}), \quad j=1,\ldots,k \; ,\label{sdp-dual-moreconstrained}\\
      & Y\in\herm{\A\otimes\B}\notag.
    \end{align}
\end{center}
Let $\alpha$, $\beta$ and $\beta'$ be respectively the solutions of the primal (\ref{sdp-primal}), 
the dual (\ref{sdp-dual}) and the more constrained dual problem (\ref{sdp-dual-moreconstrained}). 
By the weak duality theorem, we have that $\alpha \leq \beta \leq \beta'$, 
that is, any feasible solution to (\ref{sdp-dual-moreconstrained}) upper-bounds the success probability of 
distinguishing the set of states $\{\rho_{1},\ldots,\rho_{k}\}$ by performing only PPT measurements.

An immediate application of this is the following simple proof of the fact shown in \cite{Duan11} 
that it is impossible for Alice and Bob to perfectly distinguish any set of $k > d$ maximally entangled states
in $\complex^{d}\otimes\complex^{d}$ using only PPT measurements.
\begin{theorem}
No PPT measurement can perfectly distinguish more than $d$ 
maximally entangled states in $\complex^{d}\otimes\complex^{d}$. 
\end{theorem}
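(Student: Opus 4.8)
The plan is to exhibit an explicit feasible solution to the more constrained dual problem (\ref{sdp-dual-moreconstrained}) whose objective value falls strictly below $1$ whenever $k > d$. Since $\alpha \leq \beta'$ by weak duality, any feasible solution to (\ref{sdp-dual-moreconstrained}) upper-bounds the optimal success probability, so producing one with value less than $1$ immediately rules out perfect PPT-distinguishability.

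The only ingredient needed is a bound on the partial transpose of a maximally entangled state. First I would recall that any maximally entangled $\rho_{j} = u_{j}u_{j}^{*}$ in $\complex^{d}\otimes\complex^{d}$ can, via its Schmidt decomposition, be written as $(V_{j}\otimes U_{j})\,\psi\,(V_{j}\otimes U_{j})^{*}$, where $\psi = \ket{\tau}\bra{\tau}$ with $\ket{\tau} = \frac{1}{\sqrt{d}}\sum_{i}\ket{i}\otimes\ket{i}$ the canonical maximally entangled state and $U_{j}, V_{j}$ unitaries. A direct computation then gives $\pt_{\A}(\psi) = \frac{1}{d}W$, where $W = \sum_{i,j}\ket{i}\bra{j}\otimes\ket{j}\bra{i}$ is the swap operator. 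Since $W$ has eigenvalues $\pm 1$ (on the symmetric and antisymmetric subspaces), $\pt_{\A}(\psi)$ has spectral norm $1/d$. Because partial transposition on $\A$ turns conjugation by $V_{j}$ into conjugation by $\overline{V_{j}}$ while leaving the $\B$-side conjugation by $U_{j}$ intact, $\pt_{\A}(\rho_{j})$ is unitarily equivalent to $\frac{1}{d}W$ and therefore also satisfies $\pt_{\A}(\rho_{j}) \leq \frac{1}{d}\I_{\A\otimes\B}$ for every $j$.

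With this in hand I set $Y = \frac{1}{d}\I_{\A\otimes\B}$. The bound just established shows $Y \geq \pt_{\A}(\rho_{j})$ for all $j$, so $Y$ is feasible for (\ref{sdp-dual-moreconstrained}). Its objective value is $\frac{1}{k}\tr(Y) = \frac{1}{kd}\tr(\I_{\A\otimes\B}) = \frac{d^{2}}{kd} = \frac{d}{k}$, which is strictly less than $1$ exactly when $k > d$. Hence $\alpha \leq \beta' \leq d/k < 1$, and no PPT measurement can distinguish the states perfectly.

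I expect the only real work to be the partial-transpose computation, namely verifying $\pt_{\A}(\psi) = \frac{1}{d}W$ and tracking how the local unitaries transform under $\pt_{\A}$ (in particular that $\overline{V_{j}}$ appears, but unitarity, and hence the spectrum, is preserved). Everything after that is a one-line substitution into the dual program.
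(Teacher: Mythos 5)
Your proof is correct and follows essentially the same route as the paper: both take $Y = \frac{1}{d}\I_{\A\otimes\B}$ as a feasible point of the more constrained dual, yielding the bound $d/k < 1$. The only difference is that the paper simply asserts that $\pt_{\A}(\rho_{j}) = \frac{1}{d}U_{j}$ for a Hermitian unitary $U_{j}$, whereas you supply the justification via the Schmidt decomposition, the swap operator, and the behaviour of local unitaries under partial transposition --- a worthwhile elaboration of the same argument.
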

\begin{proof}
Let $\A = \B = \complex^{d}$. We are assuming that the states we want to distinguish 
$\{\rho_{1},\ldots,\rho_{k}\} \subset \density{\A\otimes\B}$ 
are all maximally entangled. Then, for each $j = 1, \ldots, k$, we have:
\[
 \pt_{\A}(\rho_{j}) = \frac{1}{d}U_{j} \, ,
\]
for some Hermitian unitary operator $U_{j} \in \unitary{\A\otimes\B} \cap \herm{\A\otimes\B}$.
It holds that $(\I_{\A}\otimes\I_{\B}) \geq U_{j}$, for each $j = 1, \ldots, k$. 
Therefore $Y = (\I_{\A}\otimes\I_{\B})/d$ is a feasible solution of the semidefinite program (\ref{sdp-dual-moreconstrained}) and, 
for any measurement $\{P_{1}, \ldots, P_{k}\} \subset \ppt{\A:\B}$, we have
\[
 \frac{1}{k}\sum_{j = 1}^k \ip{P_j}{\rho_{j}} \leq \frac{1}{k}\tr(Y) = \frac{d}{k}.
\]
\end{proof}

%-----------------------------------------------------------------------------%
\subsection{Bell diagonal states}
\label{subsec:bell}
%-----------------------------------------------------------------------------%
The following two basic propositions about Bell states will be used throughout the paper and can be proved by direct inspection.
\begin{prop}\label{transposebell}
Let $\A = \B = \complex^{2}$ and let $\psi_{i} = \ket{\psi_{i}}\bra{\psi_{i}} \in \density{\A\otimes\B}$,
for $i\in [0,3]$, be the density operators corresponding to the standard Bell states.
Then the following equations hold:
\[
\pt_{\A}(\psi_{0}) = \frac{1}{2}\I - \psi_{2}, \quad \pt_{\A}(\psi_{1}) = \frac{1}{2}\I - \psi_{3},
\quad \pt_{\A}(\psi_{2}) = \frac{1}{2}\I - \psi_{0}, \quad \pt_{\A}(\psi_{3}) = \frac{1}{2}\I - \psi_{1}.
\]
\end{prop}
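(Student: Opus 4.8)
The plan is to verify the four identities by direct computation, organized so that only the first requires genuine work and the remaining three follow by a symmetry argument. First I would expand each Bell state in the computational basis $\{\ket{00},\ket{01},\ket{10},\ket{11}\}$ using the definitions $\ket{\psi_i} = (\I\otimes\sigma_i)\ket{\psi_0}$ together with the explicit Pauli matrices, obtaining $\ket{\psi_1} = \frac{1}{\sqrt2}(\ket{01}+\ket{10})$, $\ket{\psi_2} = \frac{i}{\sqrt2}(\ket{01}-\ket{10})$, and $\ket{\psi_3} = \frac{1}{\sqrt2}(\ket{00}-\ket{11})$. Forming the corresponding rank-one density operators $\psi_i$ is then immediate.

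For the first identity I would compute $\pt_{\A}(\psi_0)$ directly from its action on basis dyads: since the partial transpose on $\A$ sends $\ket{ij}\bra{kl}$ to $\ket{kj}\bra{il}$, writing $\psi_0 = \frac12\sum_{i,j}\ket{ii}\bra{jj}$ gives $\pt_{\A}(\psi_0) = \frac12\sum_{i,j}\ket{ji}\bra{ij}$, which is exactly one half of the swap operator $W$ on $\complex^2\otimes\complex^2$. Using $W = \I - 2\,\Pi_{\mathrm{anti}}$, where $\Pi_{\mathrm{anti}}$ is the projector onto the one-dimensional antisymmetric subspace, and observing that $\Pi_{\mathrm{anti}} = \psi_2$ (the singlet), I obtain $\pt_{\A}(\psi_0) = \frac12\I - \psi_2$. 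Alternatively, one can simply write out the $4\times4$ matrices on both sides and check equality entrywise.

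The remaining three identities I would deduce without further matrix work from the covariance of the partial transpose under unitaries acting on $\B$. Since $\I\otimes\sigma_i$ acts trivially on the transposed factor and each $\sigma_i$ is Hermitian, $\pt_{\A}\big((\I\otimes\sigma_i)\psi_0(\I\otimes\sigma_i)\big) = (\I\otimes\sigma_i)\pt_{\A}(\psi_0)(\I\otimes\sigma_i) = \frac12\I - (\I\otimes\sigma_i)\psi_2(\I\otimes\sigma_i)$. It then remains to identify $(\I\otimes\sigma_i)\psi_2(\I\otimes\sigma_i)$ as a Bell state, which follows from the Pauli product relations $\sigma_1\sigma_2 = i\sigma_3$, $\sigma_2^2 = \I$, and $\sigma_3\sigma_2 = -i\sigma_1$: the scalar phases cancel against their conjugates, yielding $\psi_3$, $\psi_0$, and $\psi_1$ for $i = 1, 2, 3$ respectively.

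The computation carries no conceptual difficulty; the only place to be careful is the phase and sign bookkeeping, both in the factors of $i$ appearing in $\ket{\psi_2}$ and in the Pauli products, and in the index swap defining $\pt_{\A}$. Getting these right is what fixes the precise pairing of Bell states on the two sides of each identity.
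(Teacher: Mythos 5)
Your proof is correct. The paper offers no argument for this proposition beyond the remark that it ``can be proved by direct inspection,'' so there is nothing to compare against step by step; your write-up is simply a well-organized version of that inspection. The two ingredients you use are both sound: the identification $\pt_{\A}(\psi_0)=\frac{1}{2}W$ with $W$ the swap operator, together with $W=\I-2\psi_2$ since $\psi_2$ is the antisymmetric projector, gives the first identity, and the covariance $\pt_{\A}\bigl((\I\otimes\sigma_i)X(\I\otimes\sigma_i)\bigr)=(\I\otimes\sigma_i)\pt_{\A}(X)(\I\otimes\sigma_i)$ combined with the Pauli products $\sigma_1\sigma_2=i\sigma_3$, $\sigma_2^2=\I$, $\sigma_3\sigma_2=-i\sigma_1$ (whose phases cancel at the level of density operators) yields exactly the pairing $1\mapsto 3$, $2\mapsto 0$, $3\mapsto 1$ stated in the proposition. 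This symmetry argument also foreshadows, in a special case, the identity $\pt_{\A}[(U_1\otimes U_2)X(U_1\otimes U_2)^{*}]=(\overline{U_1}\otimes U_2)\pt_{\A}(X)(\overline{U_1}\otimes U_2)^{*}$ that the paper uses later, which is a nice bonus of your organization over a brute-force entrywise check.
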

\begin{prop}\label{groupG}
The Bell states are invariant under the following group of local symmetries:
$$
G = \{ \I \otimes \I, \sigma_{1} \otimes \sigma_{1}, \sigma_{2} \otimes \sigma_{2}, \sigma_{3} \otimes \sigma_{3}  \},
$$
i.e., $\psi_{i} = U\psi_{i}U^{*}$ for any $U \in G$ and $i\in [0,3]$.
\end{prop}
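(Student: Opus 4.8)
The plan is to avoid a tedious state-by-state verification and instead exploit the ``transpose trick'' for the maximally entangled state. First I would record the identity
\[
(A\otimes\I)\ket{\psi_{0}} = (\I\otimes A^{\t})\ket{\psi_{0}},
\]
valid for every $A\in\lin{\complex^{2}}$, which follows at once from $\ket{\psi_{0}}=\tfrac{1}{\sqrt{2}}(\ket{00}+\ket{11})$ by comparing coefficients in the two expansions. The element $\I\otimes\I$ of $G$ acts trivially, so it suffices to treat the three elements $\sigma_{k}\otimes\sigma_{k}$ for $k=1,2,3$.

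Next, applying such an element to $\ket{\psi_{i}}=(\I\otimes\sigma_{i})\ket{\psi_{0}}$ and pushing the left-hand factor across via the transpose trick, I would rewrite
\[
(\sigma_{k}\otimes\sigma_{k})\ket{\psi_{i}} = (\sigma_{k}\otimes\sigma_{k}\sigma_{i})\ket{\psi_{0}} = (\I\otimes\sigma_{k}\sigma_{i}\sigma_{k}^{\t})\ket{\psi_{0}}.
\]
I would then evaluate $\sigma_{k}\sigma_{i}\sigma_{k}^{\t}$ using the transpose behaviour $\sigma_{1}^{\t}=\sigma_{1}$, $\sigma_{2}^{\t}=-\sigma_{2}$, $\sigma_{3}^{\t}=\sigma_{3}$, together with the Pauli relations $\sigma_{k}^{2}=\I$ and the anticommutation $\sigma_{k}\sigma_{i}=-\sigma_{i}\sigma_{k}$ for distinct nonzero indices. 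A short case check shows that in every instance $\sigma_{k}\sigma_{i}\sigma_{k}^{\t}=\pm\sigma_{i}$, and hence $(\sigma_{k}\otimes\sigma_{k})\ket{\psi_{i}}=\pm\ket{\psi_{i}}$.

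Finally, since each $\sigma_{k}$ is both Hermitian and unitary, the operator $U=\sigma_{k}\otimes\sigma_{k}$ satisfies $U^{*}=U$, and the global sign $\pm 1$ disappears upon forming the density operator, yielding $U\psi_{i}U^{*}=\ket{\psi_{i}}\bra{\psi_{i}}=\psi_{i}$ as claimed. There is no real obstacle here; the only point requiring care is the sign bookkeeping, which comes jointly from the antisymmetry of $\sigma_{2}$ under transposition and from the anticommutation relations. Once one verifies that these signs always collapse to a single global factor of $\pm 1$, invariance of the density operator is automatic.
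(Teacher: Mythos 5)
Your proof is correct. The paper itself offers no written argument for this proposition---it simply asserts that it ``can be proved by direct inspection,'' i.e.\ by checking the sixteen cases $U\psi_i U^*$ explicitly. Your route via the transpose trick $(A\otimes\I)\ket{\psi_0}=(\I\otimes A^{\t})\ket{\psi_0}$ is a genuinely more structured way to carry out that verification: it reduces everything to the single-qubit computation $\sigma_k\sigma_i\sigma_k^{\t}=\pm\sigma_i$, which follows uniformly from $\sigma_k^2=\I$, the anticommutation relations, and the transposition signs of the Paulis, after which the global phase is absorbed by passing to the density operator. What this buys you is a proof that scales and explains \emph{why} the invariance holds (the Paulis form a projective basis normalizing itself under conjugation), at the cost of invoking the maximally-entangled-state identity, which itself requires a one-line coefficient check. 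The only stylistic quibble is that the appeal to $U^*=U$ in the last step is unnecessary: $U\psi_i U^* = (U\ket{\psi_i})(U\ket{\psi_i})^*$ already kills the sign regardless of whether $U$ is Hermitian. Both approaches are sound; yours is the one you would want if you later needed the analogous statement for generalized Bell states in higher dimensions.
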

\begin{definition}
We will describe the mapping of Proposition \ref{transposebell} with the following 
bijection $f:[0,3]\rightarrow[0,3]$ between indices of the set of Bell states:
\[
 f(0) = 2,\; f(1) = 3,\; f(2) = 0,\; f(3) = 1. 
\]
\end{definition}
Let $v \in \integer_{4}^{t} $ be a $t$-dimensional vector and let 
$\ket{\psi_{v}} \in \complex^{2^{t}}\otimes\complex^{2^{t}}$ 
be the maximally entangled state given by the tensor product of Bell states indexed by the vector 
$v = (v_{1}, \ldots, v_{t})$, that is,
$$
\ket{\psi_{v}} = \ket{\psi_{v_{1}}}\otimes\ldots\otimes\ket{\psi_{v_{t}}}.
$$
In the literature, operators diagonal in the basis $\{\psi_{v} = \ket{\psi_{v}}\bra{\psi_{v}} : v \in \integer_{4}^{t}\}$
are called \emph{lattice operators}, or \emph{lattice states} if they are also density operators \cite{Piani04}.
It turns out that in the case when the set to distinguish contains only lattice states,
the semidefinite program (\ref{sdp-primal}) simplifies remarkably, as the following theorem states.
\begin{theorem}
If $\rho_1, \ldots, \rho_k$ are lattice states, then the probability of successfully PPT-distinguishing them
can be expressed as the optimal value of a linear program.
\end{theorem}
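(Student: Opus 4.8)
The plan is to exploit the symmetry of the lattice states to show that the optimal PPT measurement in (\ref{sdp-primal}) may be taken to consist of lattice operators, i.e.\ operators diagonal in the basis $\{\psi_v : v \in \integer_4^t\}$; once the measurement operators are restricted to this commutative family, every quantity appearing in the program becomes linear in their diagonal entries, and the semidefinite program collapses to a linear program.

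First I would introduce the group of local symmetries $\mathcal{G} = \{W \otimes W : W = \sigma_{j_1} \otimes \cdots \otimes \sigma_{j_t}, \; j_1, \ldots, j_t \in [0,3]\} \subset \unitary{\A \otimes \B}$, the $t$-fold analogue of the group $G$ from Proposition~\ref{groupG}, which fixes every lattice state: $V \psi_v V^* = \psi_v$ for all $V \in \mathcal{G}$ and all $v$. Note that every element of $\mathcal{G}$ is a commuting Hermitian involution, so $\mathcal{G}$ is an abelian group closed under inverse. Using $\mathcal{G}$ I would define the twirl $\mathcal{T}(X) = \frac{1}{|\mathcal{G}|}\sum_{V \in \mathcal{G}} V X V^*$ and symmetrize an arbitrary feasible measurement $\{P_1, \ldots, P_k\}$ by replacing each $P_j$ with $\mathcal{T}(P_j)$. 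I then need to verify that $\{\mathcal{T}(P_j)\}$ is again a feasible PPT measurement achieving the same objective value. Preservation of the completeness relation is immediate since $\mathcal{T}(\I) = \I$, and preservation of positivity is clear because $\mathcal{T}$ is a mixture of unitary conjugations. That the objective is unchanged follows from self-adjointness of $\mathcal{T}$ together with the invariance of the lattice states, giving $\ip{\mathcal{T}(P_j)}{\rho_j} = \ip{P_j}{\mathcal{T}(\rho_j)} = \ip{P_j}{\rho_j}$.

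The one genuinely technical point, and the step I expect to be the main obstacle, is showing that $\mathcal{T}$ preserves the PPT property, i.e.\ that $\mathcal{T}(P_j) \in \ppt{\A:\B}$ whenever $P_j \in \ppt{\A:\B}$. The key identity to establish is that the twirl commutes with the partial transpose, $\pt_{\A} \circ \mathcal{T} = \mathcal{T} \circ \pt_{\A}$. This follows from the computation $\pt_{\A}\bigl((W \otimes W) X (W \otimes W)^*\bigr) = (\overline{W} \otimes W)\,\pt_{\A}(X)\,(\overline{W} \otimes W)^*$, together with the observation that for a Pauli tensor $\overline{W} = \pm W$ (the sign recording the parity of the number of $\sigma_2$ factors), so that the conjugating unitary $\overline{W} \otimes W$ induces the same conjugation as $W \otimes W$ and summing over $\mathcal{G}$ reproduces $\mathcal{T}$. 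Granting this, $\pt_{\A}(\mathcal{T}(P_j)) = \mathcal{T}(\pt_{\A}(P_j)) \geq 0$, since $\mathcal{T}$ preserves positivity and $\pt_{\A}(P_j) \geq 0$, so $\mathcal{T}(P_j)$ is PPT.

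Having reduced to $\mathcal{G}$-invariant operators, the final step is to identify these with lattice operators and read off the linear program. Because $\mathcal{G}$ is abelian and the $\psi_v$ are its simultaneous eigenprojectors with pairwise distinct characters, the fixed-point set of $\mathcal{T}$ (equivalently, the commutant of $\mathcal{G}$) is exactly the span of $\{\psi_v\}$. Thus I may write each optimal measurement operator as $P_j = \sum_v P_j(v)\,\psi_v$ with $P_j(v) \geq 0$. Orthonormality of the $\psi_v$ turns the objective into $\frac{1}{k}\sum_j \sum_v P_j(v)\,\rho_j(v)$ and the completeness relation into $\sum_j P_j(v) = 1$ for every $v$. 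Finally, Proposition~\ref{transposebell} applied factorwise gives $\pt_{\A}(\psi_v) = \bigotimes_{i=1}^t \bigl(\tfrac12 \I - \psi_{f(v_i)}\bigr)$, which is again diagonal in the Bell basis with explicitly computable coefficients; hence the PPT constraint $\pt_{\A}(P_j) \geq 0$ becomes a finite system of linear inequalities in the entries $P_j(v)$. With a linear objective and only linear equality and inequality constraints, the resulting optimization is a linear program, completing the proof.
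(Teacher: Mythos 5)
Your proposal is correct and follows essentially the same route as the paper: twirling the measurement over the local Pauli symmetry group of the Bell basis, checking that this preserves completeness, positivity, the objective value, and (via the commutation of the twirl with partial transposition, using $\overline{\sigma_2}=-\sigma_2$ with the sign cancelling under conjugation) the PPT constraint. The only difference is that you spell out the final reduction to an explicit linear program in the diagonal coefficients, which the paper leaves implicit.
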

\begin{proof}
We will prove that for any feasible solution of the semidefinite program (\ref{sdp-primal}), 
there is another feasible solution of (\ref{sdp-primal}) consisting only of lattice operators 
for which the objective function takes the same value.
Let $\Delta : \lin{\complex^{2}\otimes\complex^{2}} \rightarrow \lin{\complex^{2}\otimes\complex^{2}}$ 
be the channel defined as follows:
\[
  \Delta(X) = \frac{1}{|G|}\sum_{U \in G} UXU^{*} \, ,
\]
where $G$ is the group of local unitaries defined in Proposition \ref{groupG}. 
The channel $\Delta(X)$ acts on $X$ as a completely dephasing channel in the Bell basis.
Suppose that $\A = \B = \complex^{2^{t}}$ and $\rho_1, \ldots, \rho_k \in \density{\A\otimes\B}$ are lattice states.
We let $\Phi = \Delta^{\otimes t}$ and have
\[
  \ip{P_{j}}{\rho_{j}} = \ip{P_{j}}{\Phi(\rho_{j})} = 
  \ip{\Phi(P_{j})}{\rho_{j}} ,
\]
for any $j = 1, \ldots, k$. 
The channel $\Phi$ is unital and, in fact, it is a mixed unitary channel. 
Therefore, if $P_{1}, \ldots, P_{k}$ are such that $P_{1} + \ldots + P_{k} = \I$, 
then it holds that $\Delta(P_{1}) + \ldots + \Delta(P_{k}) = \I$. 
From the positivity of $\Delta$, we have that $\Phi(P) \geq 0$ for any $P \geq 0$. 
Now we show that the partial transpose mapping commutes with the channel $\Delta$.
First we observe how the partial transposition modifies the action of local operators.
Given $U_{1}\in \lin{\A}$, $U_{2} \in \lin{\B}$ and $X \in \lin{\A \otimes \B}$, we have
\[
 \pt_{\A} [(U_{1} \otimes U_{2})X(U_{1} \otimes U_{2})^{*}] = 
 (\overline{U_{1}}\otimes U_{2})\pt_{\A}(X)(\overline{U_{1}}\otimes U_{2})^{*}.
\]
Notice that for the Pauli matrices, we have $\overline{\sigma_{j}} = \sigma_{j}$ 
for $j \in \{ 0,1,3\}$ and $\overline{\sigma_{2}} = -\sigma_{2}$. 
Therefore 
\[
\Delta(\pt_{\A}(X)) = \pt_{\A}(\Delta(X))\, , \quad \text{for any $X \in \lin{\A\otimes\B}$.}
\]
This observation, along with the positivity of $\Delta$, leads to the following implication, which concludes the proof:
\[
 \pt_{\A}(X) \geq 0 \Rightarrow \Delta(\pt_{\A}(X)) \geq 0 \Rightarrow \pt_{\A}(\Delta(X)) \geq 0\, , \quad 
 \text{for any $X \in \lin{\A\otimes\B}$.}
\]
\end{proof}
Even though the states we will consider in the next sections are lattice states, 
we will always refer to the more general semidefinite programming formulation given 
in Section \ref{subsec:sdp-ppt}, rather than expressing the problem in a more explicit linear programming form. 

%-----------------------------------------------------------------------------%
\subsection{Examples of indistinguishable sets}
\label{sec:examplesec}
We are now ready to show some sets of $k$ maximally entangled states in $\complex^{k}\otimes\complex^{k}$ 
that are not distinguishable by PPT measurements.
\subsubsection{$k = d = 4$}
\label{sec:example}
%-----------------------------------------------------------------------------%
The following set of $k = 4$ maximally entangled states was shown in \cite{Duan11}
to be not perfectly distinguishable by PPT measurements.
Here we prove via semidefinite programming that the optimal probability of success 
of distinguishing this set for any PPT measurement is $7/8$.
We do this by exhibiting a feasible solution of the more constrained dual problem (\ref{sdp-dual-moreconstrained})
for which the objective function has value $7/8$.

\begin{example}
\label{duanexample}
Let $\A = \A_{1}\otimes\A_{2}$ and $\B = \B_{1}\otimes\B_{2}$ be respectively Alice and Bob's system, 
with $\A_{1} = \A_{2} = \B_{1} = \B_{2} = \complex^{2}$.
The set considered in \cite{Duan11} is 
$\{ \rho_{i} = \ket{x_{i}}\bra{x_{i}} \, : \, i\in [1,4]\}$, where 
\begin{align*}
 \ket{x_{1}} &= \ket{\psi_{0}}\otimes\ket{\psi_{0}}, \\
 \ket{x_{2}} &= \ket{\psi_{1}}\otimes\ket{\psi_{3}}, \\
 \ket{x_{3}} &= \ket{\psi_{2}}\otimes\ket{\psi_{3}}, \\ 
 \ket{x_{4}} &= \ket{\psi_{3}}\otimes\ket{\psi_{3}},
\end{align*}
and the bipartition is such that $\ket{x_{i}} \in \A_{1}\otimes\B_{1}\otimes\A_{2}\otimes\B_{2}$ for each $i\in [1,4]$.  
\end{example}

\begin{theorem}
\label{bound_example}
The maximal probability of success of distinguishing the set of Example (\ref{duanexample}) 
with a PPT measurement is equal to $7/8$.
\end{theorem}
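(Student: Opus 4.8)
The plan is to sandwich the optimal value $\alpha$ of the primal (\ref{sdp-primal}) between a dual upper bound and a primal lower bound, both equal to $7/8$. The natural first step is to pass to the Bell basis. Since each $\rho_i$ of Example \ref{duanexample} is a tensor product of two Bell states, I would compute its partial transpose factorwise: combining $\pt_{\A}=\pt_{\A_1}\otimes\pt_{\A_2}$ with Proposition \ref{transposebell} gives $\pt_{\A}(\rho_i) = (\tfrac12\I-\psi_{f(a_i)})\otimes(\tfrac12\I-\psi_{f(b_i)})$, where $(a_i,b_i)$ is the index pair of $\rho_i$ and $f$ is the involution of the Definition. In particular each $\pt_{\A}(\rho_i)$ is diagonal in the $16$-dimensional basis $\{\psi_a\otimes\psi_b : a,b\in[0,3]\}$ with every eigenvalue equal to $\pm 1/4$; equivalently $\pt_{\A}(\rho_i)=\tfrac14 U_i$ for a Hermitian unitary $U_i$, exactly as in the proof of the first theorem. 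Reading off the signs, the eigenvalue at index $(a,b)$ is $-1/4$ precisely when exactly one of $a=f(a_i)$, $b=f(b_i)$ holds, so each $U_i$ has exactly six $-1$ entries forming a ``cross'' centered at $(f(a_i),f(b_i))$.

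For the upper bound I would look for a dual-feasible $Y$ for the more constrained program (\ref{sdp-dual-moreconstrained}) that is itself diagonal in the Bell basis. Because $Y$ and all the $\pt_{\A}(\rho_i)$ are then simultaneously diagonal, the operator constraints $Y\ge\pt_{\A}(\rho_i)$ reduce to the scalar conditions $Y_{(a,b)}\ge\tfrac14(U_i)_{(a,b)}$ at each index. To minimize $\tr(Y)$ I set $Y_{(a,b)}=1/4$ whenever some $U_i$ carries a $+1$ at $(a,b)$, and $Y_{(a,b)}=-1/4$ only at indices where all four $U_i$ carry a $-1$. The decisive computation is to intersect the four crosses: I expect to find that $(2,1)$ is the unique index negative for all of $\rho_1,\dots,\rho_4$. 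This yields $Y=\tfrac14\I-\tfrac12\,\psi_2\otimes\psi_1$, with $\tr(Y)=\tfrac14\cdot 16-\tfrac12=\tfrac72$, hence objective value $\tfrac14\tr(Y)=7/8$, so $\alpha\le 7/8$.

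For the matching lower bound I would produce an explicit PPT measurement attaining $7/8$. Here I would invoke the lattice reduction established in the preceding theorem: since the $\rho_i$ are lattice states it suffices to search among lattice measurements $P_i=\sum_{a,b} q_i(a,b)\,\psi_a\otimes\psi_b$, turning the problem into a finite linear program in the nonnegative weights $q_i(a,b)$, with objective $\tfrac14\sum_i q_i(a_i,b_i)$, the normalization $\sum_i q_i(a,b)=1$ at each index, and the PPT constraints $\pt_{\A}(P_i)\ge 0$ written out in the Bell basis. Guided by complementary slackness with the dual solution $Y$ above — the single tight index $(2,1)$ together with the active PPT constraints should pin down the support of the optimal $q_i$ — I would solve the resulting small linear system and then verify that the candidate is genuinely PPT and achieves $\sum_i q_i(a_i,b_i)=7/2$.

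I expect this last verification to be the main obstacle. The upper bound is essentially a bookkeeping exercise once the unique all-negative index is identified, whereas exhibiting weights that form a valid measurement ($\sum_i P_i=\I$), remain PPT on every branch, and still extract success probability exactly $7/8$ is where the real work lies. With both bounds in hand, weak duality gives $\alpha\le\beta\le\beta'=7/8$ together with $\alpha\ge 7/8$ from the primal witness, forcing $\alpha=7/8$ and proving Theorem \ref{bound_example}.
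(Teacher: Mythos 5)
Your upper-bound half is correct and arrives at exactly the witness the paper uses. The sign pattern you describe is right: $\pt_{\A}(\rho_i)$ is Bell-diagonal with eigenvalue $-1/4$ precisely on the six-element ``cross'' centered at $(f(a_i),f(b_i))$; the four centers are $(2,2)$, $(3,1)$, $(0,1)$, $(1,1)$, and intersecting the four crosses does leave $(2,1)$ as the unique common negative index. Hence $Y=\tfrac14\I\otimes\I-\tfrac12\,\psi_2\otimes\psi_1$ is feasible for (\ref{sdp-dual-moreconstrained}) with $\tfrac14\tr(Y)=7/8$. This is the same $Y$ as in the paper's proof of Theorem \ref{bound_example}; your derivation via simultaneous diagonalization in the Bell product basis actually explains where it comes from, whereas the paper states it and verifies one constraint by direct expansion.

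The genuine gap is the achievability half. The theorem asserts equality, and your proposal ends with a plan (``solve the resulting small linear program \ldots and verify'') rather than with a measurement. Complementary slackness against $Y$ does not by itself pin down a primal optimum: the single tight index $(2,1)$ only tells you where mass may be sacrificed, and the linear program in the $4\times 16$ lattice weights together with the PPT constraints still has to be solved and its solution certified. The paper does this separately, in Theorem \ref{solutionprimal}, by exhibiting explicit lattice operators $P_1,\ldots,P_4$ built from $Q=\tfrac14\I\otimes(\psi_1+\psi_2)$, $R=\tfrac78\psi_0+\tfrac18\psi_3$ and $S=\tfrac18\psi_0+\tfrac78\psi_3$, and checking $\sum_iP_i=\I$, $\pt_{\A}(P_i)\ge 0$ and $\ip{P_i}{\rho_i}=7/8$ for each $i$. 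Your lattice/LP reduction is a legitimate route to such a measurement, but until you actually produce and verify one, what you have proved is only the inequality $\alpha\le 7/8$, not the stated equality.
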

\begin{proof}
It is easy to check that the following operator satisfies the constraints in (\ref{sdp-dual-moreconstrained}) 
and its trace is equal to $7/2$:
\[
Y = \frac{1}{4}\I\otimes\I - \frac{1}{2}(\psi_{2}\otimes\psi_{1}).
\]
We will check the constraint $Y \geq \pt_{\A}(\rho_1)$ and the reader can check the remaining constraints with
a similar calculation. By Proposition \ref{transposebell}, we have
\begin{align*}
 \pt_{\A}(\rho_{1}) &= \pt_{\A}(\psi_{0}\otimes\psi_{0}) = (\frac{1}{2}\I - \psi_{2})\otimes(\frac{1}{2}\I - \psi_{2}) \\
&= \frac{1}{4}\I\otimes\I - \frac{1}{2}\sum_{i\in\{0,1,3\}}(\psi_{i}\otimes\psi_{2}+\psi_{2}\otimes\psi_{i})
\end{align*}
and
\begin{align*}
 Y - \pt_{\A}(\rho_{1}) &= \frac{1}{2}(\psi_{0}\otimes\psi_{2}+\psi_{1}\otimes\psi_{2}+\psi_{3}\otimes\psi_{2}
+\psi_{2}\otimes\psi_{0}+\psi_{2}\otimes\psi_{3}) \geq 0.
\end{align*}
\end{proof}

\begin{theorem}
\label{solutionprimal}
 The bound of Theorem \ref{bound_example} is tight. In fact there is a PPT measurement that achieves the same value.
\end{theorem}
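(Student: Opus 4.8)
The plan is to prove tightness by exhibiting an explicit PPT measurement $\{P_1,P_2,P_3,P_4\}\subset\ppt{\A:\B}$ whose success probability equals $7/8$. Theorem~\ref{bound_example} already produces a feasible solution of the more constrained dual (\ref{sdp-dual-moreconstrained}) of value $\tfrac14\tr(Y)=7/8$, so by weak duality the primal optimum satisfies $\alpha\le\beta'=7/8$. Hence any genuine PPT measurement attaining $7/8$ forces $\alpha\ge 7/8$ and therefore $\alpha=7/8$, which is exactly the assertion that the bound is tight and is achieved by a PPT measurement.

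First I would invoke the lattice reduction of the preceding theorem: since $\rho_1,\dots,\rho_4$ are lattice states, there is no loss in restricting the search to lattice operators $P_j=\sum_{v\in\integer_4^2}c^{(j)}_v\,\psi_v$, diagonal in the basis $\{\psi_v=\psi_{v_1}\otimes\psi_{v_2}\}$. In this parametrization the primal (\ref{sdp-primal}) becomes a linear program: the completeness relation $\sum_j P_j=\I\otimes\I$ reads $\sum_j c^{(j)}_v=1$ for every $v$ (using $\I\otimes\I=\sum_v\psi_v$), ordinary positivity $P_j\ge 0$ is simply $c^{(j)}_v\ge 0$, and the partial-transpose positivity $\pt_{\A}(P_j)\ge 0$ unfolds into finitely many linear inequalities once $\pt_{\A}(\psi_v)$ is expanded through Proposition~\ref{transposebell}. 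The objective $\tfrac14\sum_j\ip{P_j}{\rho_j}$ collapses to $\tfrac14\sum_j c^{(j)}_{v^{(j)}}$, where $v^{(j)}$ indexes $\rho_j$, so I need these four coefficients to sum to $7/2$.

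Rather than guess the measurement, I would pin it down by complementary slackness against the dual optimizer $Y=\tfrac14\I\otimes\I-\tfrac12(\psi_2\otimes\psi_1)$ of Theorem~\ref{bound_example}. Passing to the partial-transposed variables $\tilde P_j=\pt_{\A}(P_j)$ is legitimate, since $P_j\mapsto\pt_{\A}(P_j)$ preserves the PPT cone and the completeness relation and leaves $\ip{P_j}{\rho_j}=\ip{\tilde P_j}{\pt_{\A}(\rho_j)}$ unchanged; in these variables $Y$ is the optimizer of exactly the relaxation dual to (\ref{sdp-dual-moreconstrained}). Complementary slackness then requires each $\tilde P_j$ to be supported on $\ker\!\big(Y-\pt_{\A}(\rho_j)\big)$. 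Reading these kernels off the explicit Bell-basis expansions already computed in the proof of Theorem~\ref{bound_example} dictates which coefficients must vanish, and solving the small residual linear system for the surviving coefficients yields an explicit candidate measurement.

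Finally I would verify the candidate directly, in the spirit of Theorem~\ref{bound_example}: check $c^{(j)}_v\ge 0$ and $\sum_j c^{(j)}_v=1$ by inspection, confirm $\pt_{\A}(P_j)\ge 0$ for each $j$ by expanding with Proposition~\ref{transposebell}, and evaluate $\tfrac14\sum_j\ip{P_j}{\rho_j}$ to confirm the value $7/8$. I expect the main obstacle to be precisely the partial-transpose positivity: whereas $P_j\ge 0$ is transparent in the Bell basis, the constraint $\pt_{\A}(P_j)\ge 0$ couples the coefficients through the affine action $\pt_{\A}(\psi_i)=\tfrac12\I-\psi_{f(i)}$, so the candidate extracted from complementary slackness is only guaranteed positive semidefinite and must still be checked to be genuinely PPT; reconciling these inequalities with the completeness constraint $\sum_j c^{(j)}_v=1$ is the delicate point that makes the measurement valid.
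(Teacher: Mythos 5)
Your overall strategy is sound in outline --- reduce to lattice operators via the dephasing argument of the preceding theorem, use the dual optimizer $Y=\tfrac14\I\otimes\I-\tfrac12(\psi_2\otimes\psi_1)$ as a guide, and then verify a candidate measurement directly --- and the verification checklist you give at the end (coefficients nonnegative, $\sum_j c^{(j)}_v=1$, $\pt_{\A}(P_j)\geq 0$ via Proposition~\ref{transposebell}, objective equal to $7/8$) is exactly what a complete proof must contain. But the proof stops precisely where the mathematical content begins: you never actually produce the measurement. Phrases such as ``solving the small residual linear system \ldots yields an explicit candidate'' and ``I would verify the candidate directly'' are promissory notes for the one object whose existence is the entire claim of the theorem. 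Nothing in your argument guarantees in advance that the linear program's optimum equals $7/8$ rather than something strictly smaller; weak duality only gives $\alpha\leq 7/8$, and the matching lower bound requires an exhibited feasible point. You even flag that reconciling $\pt_{\A}(P_j)\geq 0$ with the completeness constraint ``is the delicate point'' --- that is an admission that the decisive step has not been carried out. The paper closes this gap by writing down the operators explicitly ($Q=\tfrac14\I\otimes(\psi_1+\psi_2)$, $R=\tfrac78\psi_0+\tfrac18\psi_3$, $S=\tfrac18\psi_0+\tfrac78\psi_3$, and $P_i$ built from these) and checking the three properties; without comparable explicit data your argument does not establish tightness.

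A secondary, smaller point: the complementary slackness you invoke is taken against the \emph{more constrained} dual (\ref{sdp-dual-moreconstrained}), whose natural primal is a relaxation of (\ref{sdp-primal}) in which only $\pt_{\A}(P_j)\geq 0$ is enforced and $P_j\geq 0$ is dropped. A priori the optimum of that relaxation could exceed $\alpha$, in which case the support conditions you extract would describe optimizers of the relaxation and need not be attainable by any genuine PPT measurement. This is harmless as a heuristic for generating a candidate --- and in this instance it happens to work out because $\alpha=\beta'=7/8$ --- but it cannot carry any logical weight in the proof; only the final direct verification can, which returns you to the missing explicit construction.
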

\begin{proof}
Let $Q \in \pos{\complex^{4}\otimes\complex^{4}}$ and $R,S \in \pos{\complex^{2}\otimes\complex^{2}}$ be the following operators:
\[
Q = \frac{1}{4}\I\otimes(\psi_{1}+\psi_{2}), \qquad
R = \frac{7}{8}\psi_{0}+\frac{1}{8}\psi_{3}, \qquad 
S = \frac{1}{8}\psi_{0}+\frac{7}{8}\psi_{3}.
\]
Then the following operators define a PPT measurement that distinguishes the set of Example \ref{duanexample} 
with success probability $7/8$:
\begin{align*}
  P_{1} &= Q + (\frac{2}{3}\psi_{0}+\frac{1}{3}\I)\otimes R ,\\
  P_{2} &= Q + (\frac{1}{3}\psi_{0}+\psi_{1})\otimes S + \frac{1}{3}(\psi_{2}+\psi_{3})\otimes R ,\\
  P_{3} &= Q + (\frac{1}{3}\psi_{0}+\psi_{2})\otimes S + \frac{1}{3}(\psi_{1}+\psi_{3})\otimes R ,\\
  P_{4} &= Q + (\frac{1}{3}\psi_{0}+\psi_{3})\otimes S + \frac{1}{3}(\psi_{1}+\psi_{2})\otimes R .\\
\end{align*}
It is easy to check that these operators define a measurement, 
that is $\sum_{i=1}^{4}P_{i}=\I$.
Using the equations of Proposition (\ref{transposebell}) it is easy to check that those operators are also PPT.
For instance, 
\[
\pt_{\A}(P_{1}) = 
(\psi_{1} + \psi_{2} + \psi_{4})\otimes(\frac{1}{3}\psi_{1}+\frac{1}{2}\psi_{2}+\frac{1}{3}\psi_{4})
+ \frac{1}{4}\psi_{3}\otimes(\psi_{2}+\psi_{3}) \geq 0. 
\]
Finally, we have that $\ip{P_{i}}{\rho_{i}} = \frac{7}{8}$ for each $i\in [1,4]$.
\end{proof}
\subsubsection{$k = d = 2^{n}$, $n \geq 2$}
In \cite{Ghosh11}, the authors pose the question of whether there exists a set of $k$ maximally entangled states 
in $\complex^{d}\otimes\complex^{d}$ not perfectly distinguishable by LOCC, for some $k$ such that $4 < k \leq d$. 
Here we give an explicit construction of such sets when $k=d$ is any power of $2$ and the states are given
by the tensor product of Bell states.
\begin{lemma}
\label{lemma:xor}
Given a vector $\mathbf{i} = (i_{1}, \ldots, i_{n})$, we define the set
\[
S(\mathbf{i}) = \{(j_{1},\ldots,j_{n}) : \bigoplus_{l=1,\ldots,n}\delta_{i_{l}j_{l}} = 1\}, 
\]
where $\oplus$ denotes the sum modulo $2$ and $\delta_{i_{l}j_{l}} = 1$ if and only if $i_{l} = j_{l}$.
Let $\A = \B = \complex^{2^n}$.
Then the partial transpose of $\psi_{\mathbf{i}} = \psi_{i_{1}}\otimes\ldots\otimes\psi_{i_{n}} \in \density{\A\otimes\B}$ is equal to
\[
 \pt_{\A}(\psi_{\mathbf{i}}) = \frac{1}{2^{n}}\left(\I - 2\sum_{(j_{1},\ldots,j_{n}) \in S(\mathbf{i})}\psi_{f(j_{1})}\otimes\cdots\otimes\psi_{f(j_{n})}\right),
\]
where $f$ is the bijection defined in Section \ref{subsec:bell}.
\end{lemma}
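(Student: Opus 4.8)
The plan is to reduce everything to the single-pair formula of Proposition \ref{transposebell} and then carry out a clean expansion in the Bell basis. The first step is to observe that the partial transpose factorizes across the $n$ pairs. Although $\psi_{\mathbf{i}}$ is written as a product of Bell operators on $\A_1\otimes\B_1\otimes\cdots\otimes\A_n\otimes\B_n$, we regard it as an operator on $\A\otimes\B$ with $\A = \A_1\otimes\cdots\otimes\A_n$ and $\B = \B_1\otimes\cdots\otimes\B_n$; under this regrouping $\pt_{\A}$ is the transpose applied to all of Alice's factors at once. Since the transpose of a tensor product is the tensor product of the transposes, I would first establish
\[
 \pt_{\A}(\psi_{i_1}\otimes\cdots\otimes\psi_{i_n}) = \pt_{\A_1}(\psi_{i_1})\otimes\cdots\otimes\pt_{\A_n}(\psi_{i_n}),
\]
and then apply Proposition \ref{transposebell} factorwise to rewrite the right-hand side as $\bigotimes_{l=1}^{n}\bigl(\tfrac{1}{2}\I - \psi_{f(i_l)}\bigr)$.

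The second step is to expand this tensor product. Because $f$ is a bijection and the Bell states form an orthonormal basis of $\complex^2\otimes\complex^2$, the completeness relation reads $\sum_{m=0}^{3}\psi_{f(m)} = \I$, so I would rewrite each factor as
\[
 \tfrac{1}{2}\I - \psi_{f(i_l)} = \sum_{m=0}^{3}\tfrac{1}{2}(-1)^{\delta_{m i_l}}\psi_{f(m)},
\]
since subtracting a full copy of $\psi_{f(i_l)}$ flips the sign of precisely the $m = i_l$ term. Multiplying these factors out, the coefficient of $\psi_{f(j_1)}\otimes\cdots\otimes\psi_{f(j_n)}$ becomes $\tfrac{1}{2^n}(-1)^{\sum_l \delta_{i_l j_l}}$, which is $\pm\tfrac{1}{2^n}$ exactly according to the parity $\bigoplus_l \delta_{i_l j_l}$. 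Thus the term carries a $+$ sign when $(j_1,\ldots,j_n)\notin S(\mathbf{i})$ and a $-$ sign when $(j_1,\ldots,j_n)\in S(\mathbf{i})$.

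The final step is a fold: writing the full signed sum as the sum over all index tuples minus twice the sum over $S(\mathbf{i})$, and recognizing $\sum_{(j_1,\ldots,j_n)}\psi_{f(j_1)}\otimes\cdots\otimes\psi_{f(j_n)} = \bigl(\sum_m \psi_{f(m)}\bigr)^{\otimes n} = \I$, yields exactly the claimed expression. I expect the combinatorial bookkeeping of the signs — matching the parity of coordinate agreements to membership in $S(\mathbf{i})$ — to be the conceptual heart of the argument, while the only genuine subtlety to guard against is the implicit regrouping of tensor factors underlying the factorization of $\pt_{\A}$ in the first step; once that is pinned down, the remainder is a routine expansion.
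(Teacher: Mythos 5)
Your proof is correct and follows the same route the paper intends: the paper's own proof is the single line ``It follows straightforwardly from Proposition \ref{transposebell}.'' You have simply filled in the details of that reduction --- the factorization of $\pt_{\A}$ across the $n$ pairs, the factorwise application of the single-pair formula, and the sign bookkeeping via the completeness relation $\sum_{m}\psi_{f(m)}=\I$ --- all of which checks out.
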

\begin{proof}
It follows straightforwardly from Proposition \ref{transposebell}.
% http://oeis.org/A007582
\end{proof}
\begin{theorem}
For any $n \geq 2$, there is a set of $k = 2^{n}$ maximally entangled states in $\complex^{k}\otimes\complex^{k}$ 
that is not perfectly distinguishable by PPT operations.
\end{theorem}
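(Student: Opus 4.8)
The plan is to give an explicit construction built entirely from tensor products of Bell states, and to certify indistinguishability by exhibiting a single feasible point of the more constrained dual (\ref{sdp-dual-moreconstrained}) whose objective value is strictly below $1$. By weak duality this forces the optimal success probability $\alpha \leq \beta' < 1$.

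First I would fix the reference vector $\mathbf{0} = (0,\ldots,0) \in \integer_4^{n}$ and select the states. For $\mathbf{i} \in \integer_4^{n}$ write $\psi_{\mathbf{i}} = \psi_{i_1} \otimes \cdots \otimes \psi_{i_n}$; these are mutually orthogonal maximally entangled states in $\complex^{2^n} \otimes \complex^{2^n}$ (each factor traces to $\I/2$, so $\psi_{\mathbf{i}}$ traces to $\I/2^n$), hence it suffices to pick $2^n$ distinct index vectors. The condition I would impose is that each chosen $\mathbf{i}$ agree with $\mathbf{0}$ in an odd number of coordinates, equivalently $\mathbf{0} \in S(\mathbf{i})$ in the notation of Lemma \ref{lemma:xor}. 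A direct count shows the number of such vectors is
\[
\sum_{m \text{ odd}} \binom{n}{m}\, 3^{\,n-m} = \frac{4^n - 2^n}{2},
\]
which exceeds $2^n$ precisely when $n \geq 2$; I would therefore take any $2^n$ of these vectors as the indices of our set.

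Next I would propose the dual solution
\[
Y = \frac{1}{2^n}\,\I - \frac{2}{2^n}\,\psi_{f(\mathbf{0})}, \qquad \psi_{f(\mathbf{0})} = \psi_2 \otimes \cdots \otimes \psi_2,
\]
and check feasibility using Lemma \ref{lemma:xor}, which gives $\pt_{\A}(\psi_{\mathbf{i}}) = \tfrac{1}{2^n}\I - \tfrac{2}{2^n}\sum_{\mathbf{j} \in S(\mathbf{i})} \psi_{f(\mathbf{j})}$, so that
\[
Y - \pt_{\A}(\psi_{\mathbf{i}}) = \frac{2}{2^n}\Big(\sum_{\mathbf{j} \in S(\mathbf{i})} \psi_{f(\mathbf{j})} - \psi_{f(\mathbf{0})}\Big).
\]
Because $\mathbf{0} \in S(\mathbf{i})$ by construction and $f$ is a bijection, the term $\psi_{f(\mathbf{0})}$ is exactly cancelled and what remains is a sum of mutually orthogonal rank-one projectors, hence positive semidefinite; this is where the odd-agreement condition does all the work. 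Finally, since $\tr(\psi_{f(\mathbf{0})}) = 1$ and $\tr(\I) = 4^n$, the objective value is $\tfrac{1}{2^n}\tr(Y) = 1 - 2/4^n < 1$ (recovering $7/8$ when $n = 2$), so no PPT measurement distinguishes the set perfectly.

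The only genuinely delicate point is the selection step: one must verify both that enough index vectors satisfy the odd-agreement condition (the counting identity above, valid exactly for $n \geq 2$) and that this same condition is precisely what guarantees the chosen projector $\psi_{f(\mathbf{0})}$ appears in every $\pt_{\A}(\psi_{\mathbf{i}})$. Everything else — orthogonality, maximal entanglement, and the positivity check — is immediate once the operators are diagonal in the lattice basis.
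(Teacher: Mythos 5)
Your proof is correct, and it rests on exactly the same certificate as the paper: the dual feasible point $Y = \tfrac{1}{k}\I - \tfrac{2}{k}\,\psi_2\otimes\cdots\otimes\psi_2$ together with Lemma \ref{lemma:xor}, giving the bound $1 - 2/k^2$. The only difference is in how the $2^n$ index vectors are selected. The paper takes $\mathbf{i} = (0, i_2,\ldots,i_n)$ with every $i_l \neq 0$ for $l \geq 2$ (so exactly one agreement with $\mathbf{0}$), counts $3^{n-1} > 2^n$ such vectors, and therefore must treat $n=2$ separately by pointing back to the explicit example of Section \ref{sec:example}. You instead isolate the condition that actually makes the certificate work --- an \emph{odd} number of agreements with $\mathbf{0}$, i.e.\ $\mathbf{0} \in S(\mathbf{i})$ --- and your count $(4^n - 2^n)/2 \geq 2^n$ holds already at $n=2$, so the whole range $n \geq 2$ is handled uniformly and the paper's construction appears as a special case. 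That is a modest but genuine improvement in transparency: it makes explicit which combinatorial property of the index set is being used, and it shows the family of valid constructions is much larger than the one exhibited in the paper.
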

\begin{proof}
The case $n = 2$ is covered in Section \ref{sec:example}.
Here we construct a set for any $n \geq 3$.
Consider the set of states $S = \{ \rho_{j} = \psi_{0}\otimes\rho_{j}' : j=1,\ldots,k \}$, 
where each $\rho_{j}'$ is a tensor product of one of the $3^{n-1}$ combinations 
of Bell states different from $\psi_{0}$.
Since $3^{n-1} > 2^{n}$ for any $n \geq 3$, we can always construct such set. 
By using Lemma \ref{lemma:xor}, it is easy to check that the operator 
$$
Y = \frac{1}{k}\I - \frac{2}{k}\psi_{2}\otimes\ldots\otimes\psi_{2}
$$
satisfies the constraints of the semidefinite program (\ref{sdp-dual-moreconstrained}). Also, its trace is strictly less than $1$. 
\end{proof}

\begin{remark}
An interesting feature of the set of states $S$ considered in the above proof is that Alice and Bob 
are basically being provided with the maximally entangled pair $\psi_{0} \in \density{\complex^{2}\otimes\complex^{2}}$ as a resource, 
but they are still not able to distinguish the set $\{\rho_{j}' : 1 \leq j \leq 2^{n} \} \subset \density{\complex^{2^{n-1}}\otimes\complex^{2^{n-1}}}$. 
In fact, for larger $n$, it is easy to see that we can even give them $c > 1$ entangled pairs, 
as long as $c$ is odd and $c \leq (1-\log_{3}2)n$, and a construction of an indistinguishable set similar to the one above will still work.
\end{remark}

\begin{remark}
The upper bound of the probability of distinguishing the sets we derive from the semidefinite program is $1-2/k^{2}$, 
which is the value of the trace of the operator $Y$ in the above theorem multiplied by $1/k$.
Notice that there exist sets that are in some sense even more indistinguishable.
For example, in the case of $k=8$, 
we could show that the following set of states cannot be PPT-distinguished with a success 
probability bigger than $15/16$:
\[
\{ \psi_{(1,1,1)}, \psi_{(1,1,3)}, \psi_{(1,1,4)}, \psi_{(2,2,2)},
\psi_{(3,3,1)}, \psi_{(3,3,3)}, \psi_{(3,3,4)}, \psi_{(4,2,2)} \},
\]
where $\psi_{(i,j,k)} = \psi_{i}\otimes\psi_{j}\otimes\psi_{k}$.
\end{remark}

\subsubsection{$k=d=5,6$}
\label{subsec:fiveandsix}
We ran the semidefinite programming solver \texttt{CVX} \cite{CVX} against the sets of the examples given 
in \cite{Ghosh11} for the case $k=d=5,6$ 
and they turned out to be perfectly distinguishable by PPT measurements.
Therefore the question they pose, whether LOCC protocols more powerful than one-way protocols can perfectly distinguish those sets, 
remains open.

On the other hand, again by running numerical computations with \texttt{CVX}, 
we show that the following two sets, respectively of $k=5$ and $k=6$ generalized Bell states in $\complex^{k}\otimes\complex^{k}$, 
are not perfectly distinguishable by PPT measurements. 
These examples come from \cite{Ghosh04}, where it was shown that cannot be reliably 
distinguished by so-called standard teleportation protocols, which are a subset of LOCC protocols.

\begin{example}[$d = k = 5$]
Any PPT measurement errs with probability at least $0.0101$ when trying to distinguish the set of generalized Bell states
\[
\psi_{0,0}, \psi_{1,1}, \psi_{2,1}, \psi_{1,3}, \psi_{2,3} \in \complex^{5}\otimes\complex^{5}. 
\]
\end{example}

\begin{example}[$d = k = 6$] 
Any PPT measurement errs with probability at least $0.002$ when trying to distinguish the set of generalized Bell states
\[
\psi_{0,0}, \psi_{1,0}, \psi_{2,0}, \psi_{3,0}, \psi_{4,0}, \psi_{0,3} \in \complex^{6}\otimes\complex^{6}.
\]
\end{example}

\section{Unambiguous PPT discrimination}
\label{sec:unambiguous}
In the previous section, we analyzed the problem of distinguishing quantum states 
using PPT measurements that minimize the probability of error.
Bandyopadhyay \cite{Som} raised the question of what is the probability of error if, instead,
we consider an \emph{unambiguous} PPT strategy to distinguish the sets of states of Section \ref{sec:examplesec}.
In such strategy, Alice and Bob never give an incorrect answer, although their answer can be inconclusive.
If there are $k$ states to be distinguished, an unambiguous measurement consists of $k+1$ operators, 
where the outcome of the operator $P_{k+1}$ corresponds to the inconclusive answer.
In this section, we cast this problem into the framework of semidefinite programming and we make a comparison 
with the result we obtained in Section \ref{sec:ppt} for the example considered in \cite{Duan11}. 
The semidefinite programming approach has already been used to study unambiguous discrimination \cite{Eldar03}, 
but never, as far as we know, to study unambiguous PPT discrimination. 
In fact, we believe that unambiguous PPT discrimination in general, or even unambiguous LOCC discrimination,
has not been thoroughly investigated yet.

The optimal value of the following semidefinite program is equal to the success probability of 
unambiguously distinguishing a set of states $\{ \rho_{1}, \ldots, \rho_{k} \}$ using PPT measurements.
Again, we assume that the states are drawn with a uniform probability.
\begin{center}
    \centerline{\underline{Primal problem}}\vspace{-4mm}
    \begin{align}
      \text{maximize:}\quad & \frac{1}{k} \sum_{j = 1}^k \ip{P_j}{\rho_{j}}\notag\\
      \text{subject to:}\quad & P_1+ \cdots + P_{k+1} = \I_{\A} \otimes \I_{\B}, \label{sdp-primal-unambiguous}\\
      & P_1,\ldots,P_{k+1}\in\ppt{\A:\B}, \notag\\
      & \ip{P_{i}}{\rho_{j}} = 0, \qquad 1 \leq i,j \leq k, \quad i \neq j. \notag
    \end{align}
\end{center}

\begin{center}
    \centerline{\underline{Dual problem}}\vspace{-4mm}
    \begin{align}
      \text{minimize:}\quad & \frac{1}{k}\tr(Y)\notag\\
      \text{subject to:}\quad & Y - \rho_{j} + \sum_{\substack{1\leq i \leq k \\ i\neq j}}y_{i,j}\rho_{i} \geq \pt_{\A}(Q_{j}), \quad j=1,\ldots,k \; ,\notag\\
      & Y \geq \pt_{\A}(Q_{k+1}),\label{sdp-dual-unambiguous}\\
      & Q_{1}, \ldots, Q_{k+1}\in\pos{\A\otimes\B},\notag\\
      & Y \in \herm{\A\otimes\B},\notag\\
      & y_{i,j} \in \real, \quad 1 \leq i,j \leq k, \quad i \neq j.\notag
    \end{align}
\end{center}
Interestingly, the optimal probability of unambiguously distinguish the set of states of Example \ref{duanexample} 
with PPT measurements is $3/4$, which should be compared with the success probability of $7/8$ that can be achieved 
with a minimum-error strategy (see Theorem \ref{solutionprimal}). 
In fact, using a semidefinite program solver, we were also able to verify that this bound is actually tight.

\begin{theorem}
The maximum success probability of unambiguously distinguish the set of states of Example \ref{duanexample}
with PPT measurements is equal to $3/4$. 
\end{theorem}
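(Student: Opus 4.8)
The plan is to mirror the two-part argument used in the minimum-error case (Theorems \ref{bound_example} and \ref{solutionprimal}): first establish the upper bound $3/4$ by exhibiting a feasible point of the dual program (\ref{sdp-dual-unambiguous}) whose objective equals $3/4$, then establish tightness by exhibiting an explicit unambiguous PPT measurement attaining success probability $3/4$. Since the primal value $\alpha$ and dual value $\beta$ satisfy $\alpha \leq \beta$ by weak duality, a matching pair of certificates pins the optimum to exactly $3/4$.

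For the upper bound I would look for a lattice-diagonal $Y$ with $\tr(Y)=3$, together with positive operators $Q_1,\ldots,Q_5$ and real multipliers $y_{i,j}$, satisfying the five operator inequalities of (\ref{sdp-dual-unambiguous}). As in Theorem \ref{bound_example}, I expect a guess of the form $Y = \frac14\I\otimes\I + (\text{small correction on a few Bell-diagonal terms})$ to work, with the $Q_j$ chosen to absorb the remaining slack. The genuinely new ingredient relative to the minimum-error dual is the family of sign-free multipliers $y_{i,j}$ attached to the orthogonality constraints $\ip{P_i}{\rho_j}=0$; these allow each constraint to be helped by adding multiples of the wrong states $\rho_i$, and it is the freedom to choose them that should drive the bound down from $7/8$ to $3/4$.

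For tightness I would invoke the lattice-state reduction of Section \ref{subsec:bell}: since every $\rho_i$ is a lattice state, the dephasing argument lets me restrict the search to lattice (Bell-diagonal) operators $P_i = \sum_{a,b} c^{(i)}_{a,b}\,\psi_a\otimes\psi_b$ with $c^{(i)}_{a,b}\ge 0$. In this representation the unambiguity constraints $\ip{P_i}{\rho_j}=0$ simply force the coefficient of $P_i$ on the index of each wrong state to vanish, completeness is the single linear condition $\sum_i\sum_{a,b} c^{(i)}_{a,b}\,\psi_a\otimes\psi_b = \I$, and—using $\pt_{\A}(\psi_a\otimes\psi_b) = (\tfrac12\I-\psi_{f(a)})\otimes(\tfrac12\I-\psi_{f(b)})$ from Proposition \ref{transposebell}—each PPT constraint $\pt_{\A}(P_i)\ge 0$ becomes nonnegativity of the $16$ Bell-diagonal coefficients of $\pt_{\A}(P_i)$. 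The problem is thus a finite linear program, and I would exhibit an explicit nonnegative coefficient assignment whose objective $\frac14\sum_i c^{(i)}_{v_i}$ equals $3/4$, where $v_1=(0,0)$, $v_2=(1,3)$, $v_3=(2,3)$, $v_4=(3,3)$, taking the inconclusive operator to be $P_5 = \I - \sum_{i\le 4}P_i$.

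The main obstacle is producing the two matching certificates by hand. On the dual side I must satisfy all five operator inequalities while holding $\tr(Y)=3$, and the multipliers $y_{i,j}$ couple the constraints in a way that has no analogue in Theorem \ref{bound_example}. On the primal side I must check that the candidate lattice operators are genuinely PPT (all $16$ transposed coefficients nonnegative) and complete, and that $P_5$ is positive and PPT as well. Each verification is routine once the operators are written down; the delicate part is guessing operators that meet exactly at $3/4$ rather than merely bracketing it. This is where the symmetry under $G\otimes G$ (Proposition \ref{groupG}) is most useful, since restricting to $G^{\otimes 2}$-invariant, i.e.\ lattice-diagonal, operators collapses the semidefinite program to the small linear program that can be solved in closed form.
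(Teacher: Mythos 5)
Your strategy is the right one and, for the upper bound, is exactly the paper's: exhibit a feasible point of the dual program (\ref{sdp-dual-unambiguous}) with objective value $3/4$. For tightness the paper in fact does \emph{less} than you propose --- it only reports a numerical SDP verification that the bound is attained --- whereas your plan to produce an explicit unambiguous PPT measurement via the lattice-state linear-program reduction of Section \ref{subsec:bell} is sound and would strengthen the argument. So there is no wrong turn in the outline.

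The genuine gap is that the proposal never produces either certificate, and the certificates are the entire mathematical content of the theorem. Everything is left at the level of ``I would look for'' and ``I expect a guess of the form \ldots to work.'' Two points make this more than a formality. First, your ansatz $Y = \frac{1}{4}\I\otimes\I + (\text{small correction})$ is not what works here: the dual certificate the paper exhibits is a markedly non-uniform lattice operator, namely $Y = \frac{1}{4}\bigl[(\I-\psi_{1})\otimes(\I - 2\psi_{4}) + \psi_{1}\otimes(-\psi_{1}+3\psi_{2}+3\psi_{3}+\psi_{4})\bigr]$ in the paper's indexing, which carries negative coefficients on some Bell-product terms; the multipliers are all taken to be $y_{i,j}=1$, and the five operators $Q_{1},\ldots,Q_{5}$ are chosen so that the first four dual constraints hold with \emph{equality}. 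None of this is guessable from the minimum-error certificate of Theorem \ref{bound_example} by a ``small correction,'' so the search you defer is the hard step, not a routine one. Second, the primal side (the explicit measurement achieving $3/4$) is likewise only described as a finite linear program you would solve; without an explicit nonnegative coefficient assignment that is complete, PPT after partial transposition, and consistent with the unambiguity constraints, the tightness claim is unverified. As written, the proposal is a correct plan but not a proof.
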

\begin{proof}
We show a feasible solution of the dual problem for which the value of the objective function is $3/4$. Let
\[
Y = \frac{1}{4}[(\I-\psi_{1})\otimes(\I - 2\psi_{4}) + \psi_{1}\otimes(-\psi_{1}+3\psi_{2}+3\psi_{3}+\psi_{4})].
\]
and
\begin{align*}
 Q_{1} &= (\I - \psi_{3})\otimes\psi_{3} + \psi_{3}\otimes(\psi_{2}+\psi_{3}), \\
 Q_{2} &= (\psi_{1} + \psi_{2})\otimes\psi_{2} + \psi_{4}\otimes(\I - \psi_{2}), \\
 Q_{3} &= (\psi_{2} + \psi_{4})\otimes\psi_{2} + \psi_{1}\otimes(\I - \psi_{2}), \\
 Q_{4} &= (\psi_{1} + \psi_{4})\otimes\psi_{2} + \psi_{2}\otimes(\I - \psi_{2}), \\
 Q_{5} &= \psi_{3}\otimes\psi_{2}.
\end{align*}
We can use Proposition \ref{transposebell} to check that the following equations hold:
\[
Y - \rho_{j} + \sum_{\substack{1\leq i \leq k \\ i\neq j}}\rho_{i} = \pt_{\A}(Q_{j}), \quad j=1,\ldots,4 \quad\text{and}\quad Y \geq \pt_{\A}(Q_{5}),
\]
i.e., the constraints of the program (\ref{sdp-dual-unambiguous}) are satisfied. Also, we have that $\tr(Y) = 3$.
\end{proof}

\section{Conclusion}
In summary, we have extended previously known results about the indistinguishability 
of some sets of $d$ orthogonal maximally entangled states in $\complex^{d}\otimes\complex^{d}$ using PPT measurements.
We hope that our approach based on semidefinite programming
can lead to a better understanding of the power of local operations 
at least for what concerns the task of distinguishing quantum states.

The main open question is whether there exist examples of sets such as the ones we considered, but of size $k < d$. 
For small values of $d$, we obtained an unsuccessful answer to this question 
from running an exhaustive numerical search against sets consisting of $d-1$ 
generalized Bell states, or lattice states, or states constructed from complex Hadamard matrices. 
It would be interesting to try some different constructions of orthogonal maximally entangled states.
The following are other interesting unanswered questions related to our results.
\begin{itemize}
  \item Can we achieve the same bound of Theorem \ref{solutionprimal} using a separable (or LOCC) measurement?
  \item Are there examples of sets for which LOCC measurements do worse than PPT?
\end{itemize}

\subsection*{Acknowledgments}
I am grateful to John Watrous for sharing the idea that inspired this paper and for several useful discussions.
I would like to thank Sevag Gharibian and Nengkun Yu for helpful email exchanges,
as well as Marco Piani for pointing out to me the definition of lattice states in \cite{Piani04}, and 
Som Bandyopadhyay, whose observations led me to the result of Section \ref{sec:unambiguous}.

%-----------------------------------------------------------------------------%

%-----------------------------------------------------------------------------%

\end{document}